\documentclass[preprint,showkeys,preprintnumbers,amsmath,amssymb]{revtex4}

\usepackage{amsthm}

\begin{document}

\title{Boundary driven zero-range processes in random media}

\author{Otto Pulkkinen}

\email{otto.pulkkinen@jyu.fi}

\affiliation{Department of Physics, P.O.~Box 35, FI-40014, University of Jyv\"askyl\"a, Finland}

\begin{abstract}
The stationary states of boundary driven zero-range processes in random media with quenched disorder are examined, %%@
and the motion of a tagged particle is analyzed. For symmetric transition rates, also known as the random barrier %%@
model, the stationary state is found to be trivial in absence of boundary drive. Out of equilibrium, two further %%@
cases are distinguished according to the tail of the disorder distribution. For strong disorder, the fugacity %%@
profiles are found to be governed by the paths of normalized $\alpha$-stable subordinators. The expectations of %%@
integrated functions of the tagged particle position are calculated for three types of routes.   
\end{abstract}

\keywords{zero-range process; random media; subordinator; tagged particle.}

\maketitle

\section{\label{SecIntro}Introduction}

The zero-range process, first introduced by Spitzer \cite{Spitzer70}, is a simple model for a gas of mutually %%@
interacting particles on a finite or infinite lattice. The interaction is local in the most strict sense: the rate, %%@
at which particles leave a lattice site, depends on the number of particles present on that particular site only. %%@
This restriction makes the stationary state analytically tractable--it is a product measure--even in presence of %%@
driving fields that break the spatial symmetry of the transition rates. The invariant measures of the process on %%@
${\mathbb Z}^d$ were studied by Andjel \cite{Andjel82}. We shall consider finite lattices only.

Due to its analytical tractability, the zero-range process has been used to test ideas concerning the stationary %%@
states and dynamics of real interacting systems. The hydrodynamic limits were given a rigorous treatment in the book %%@
of Kipnis and Landim \cite{KipnisLandim99}, which provided a basis for recent studies concerning the thermodynamic %%@
functionals of nonequilibrium steady states \cite{Bertini02}. The applications of zero-range processes in physics %%@
range from shaken granular gases to condensation phenomena (see \cite{EvansHanney05} and references therein).    
	
The stationary states of zero-range processes with open boundaries have already been discussed in detail by Levine, %%@
Mukamel, and Sch\"utz \cite{Levine05}. In this article, we shall consider open systems but the media itself will be a %%@
random in the sense that the time evolution of the particle system takes place in a fixed environment, but the exact %%@
structure of the media is not known--only its statistics for a generic sample. This type of randomness is known as %%@
quenched disorder. Bulk driven zero-range processes in random media have been used to study jamming transitions in %%@
asymmetric exclusion processes with particle-wise disorder \cite{Evans96,Krug96}. We shall show that there is plenty %%@
to discover even in the boundary driven but otherwise symmetric system. In particular, the stationary state shows %%@
nontrivial correlations when the system is driven out of equilibrium. By restricting the dynamics to the most simple %%@
and natural case of symmetric media, where the jump rate from lattice site $i$ to $i+1$ equals the rate in the %%@
opposite direction (imagine the particles surmounting barriers with independent and identically distributed random %%@
heights), we shall find that, due to universal laws for normalized sums of independent and identically distributed %%@
random variables, closed expressions for a number of quantities can be found. 

In addition to characterizing the structure of the stationary states in random media, we shall also focus on the %%@
motion of a single, tagged particle in a gas of zero-range interacting particles. This problem is on one hand %%@
connected to the theory of diffusions and random walks in random environments reviewed in %%@
\cite{HavlinBenAvraham87,HughesVol2,Zeitouni06} (Isichenko \cite{Isichenko92} gives emphasis to the motion of a %%@
passive tracer in random flows). Most of the studies concerning random walks in random environments allow the %%@
asymmetry of the transition probabilities, which makes the problem much harder for zero-range processes. Even the %%@
model with particles diffusing in a landscape of valleys with random depths (the particle jumps out of a site to %%@
either direction with equal probability \cite{HavlinBenAvraham87}) presents some extra difficulty as compared to our %%@
model with barriers of random height, because the existence of the stationary state is, in that case, not easily %%@
controlled. Rigorous results for a single particle in symmetric random environment were derived by Kawazu and Kesten %%@
\cite{Kawazu84}. They proved that the suitably speeded up random walk converges on the level of path measures to an %%@
$\alpha$-stable diffusion as the process is observed on larger and larger length scales. Such convergence results are %%@
sometimes called "invariance principles" \cite{Ferrari96}.    
 
On the other hand, there exists a vast literature on limits of tagged particle motion in exclusion and zero-range %%@
processes, and some other interacting particle systems \cite{Ferrari96}, mostly in homogeneous media. There are a few %%@
important contributions for zero-range processes: For symmetric and translation invariant transition probabilities, %%@
Saada \cite{Saada90} proved a central limit theorem for the position of the tagged particle for a particular %%@
interaction, and Siri \cite{Siri98} showed an invariance principle for processes on tori, with the interaction %%@
possibly depending on the location. Sethuraman \cite{Sethuraman06} considered the variance of the particle position %%@
in asymmetric cases. 

We shall show how the expectations of certain functionals of the paths of the tagged particle, more precisely %%@
functions of the tagged particle position integrated up to stopping times of an auxiliary process, can be calculated %%@
first in the finite system and, in the end, as the macroscopic limit is approached. The convergence of the motion in %%@
random symmetric media to a diffusion on a formal level is not proved.
  
The structure of the article is as follows. In the next section, we define the model and solve its stationary %%@
distribution in a fixed environment. For later use in this article, we also derive the generator of the adjoint %%@
dynamics. Section \ref{SecSrm} discusses the stationary states in random media. The cases of weakly %%@
(\ref{SecStatLessThanInfty}) and strongly (\ref{SecStatInfty}) inhomogeneous environments are treated separately. %%@
Some properties of the fugacity profiles in the latter case are discussed in section \ref{SecDistribution}. The %%@
section \ref{SecTagged} is devoted to the motion of a tagged particle, again first in a given environment (section %%@
\ref{SecTravelsFinite}). Its movement in random media is considered in \ref{SecTravelRandom}. The results are %%@
commented and related to other studies in the last section. 

Throughout the article, the symbols $a \vee b$ and $a \wedge b$ are used, respectively, to denote the maximum and the %%@
minimum of two numbers $a$ and $b$, and $\lfloor x \rfloor$ stands for the greatest integer smaller than or equal to %%@
$x$.    

\section{Stationary distribution}
\protect\label{SecStat} 

\subsection{Preliminaries}
\label{SecPreli}

The nearest neighbour zero-range process $(X(t))_{t\geq 0}=(X_1 (t),\ldots ,X_n (t))_{t\geq 0}$ on $[n]=\lbrace %%@
1,2,\ldots,n \rbrace$ with left (resp.\ right) boundary reservoir at $0$ (resp.\ $n+1$) is a Markov process with %%@
values in ${\mathbb Z}_+^n$ for each $t\geq 0$, and generated by
{\setlength\arraycolsep{2pt}
\begin{eqnarray}
\label{Lnf}
L_n f (\eta) &=& v_0 \lbrack f(\eta^{0,1}) - f(\eta) \rbrack+  u_{n+1} \lbrack f(\eta^{n+1,n}) - f(\eta) \rbrack %%@
\nonumber \\ 
& & + \sum_{i=1}^{n} g(\eta_i) \lbrace v_i  \lbrack f(\eta^{i,i+1}) - f(\eta) \rbrack +  u_{i} \lbrack %%@
f(\eta^{i,i-1}) - f(\eta) \rbrack \rbrace,
\end{eqnarray}}
where $\eta^{i,j}$ is the state obtained from $\eta$ by moving a particle from $i$ to $j$ with 
the convention that $\eta^{0,1}=(\eta_1 +1, \eta_2, \ldots, \eta_n)$ etc.\ for transitions involving boundary %%@
reservoirs. The single particle rates $(u_i,v_i)$ for the jumps to the left and right, respectively, are assumed %%@
positive and bounded, and the interaction $g:\mathbb{Z}_+ \longrightarrow \mathbb{R}_+$ is such that, for some %%@
$\epsilon>0$, $0=g(0)<\epsilon \leq g(k)$ for $k > 0$, and has bounded increments;
$\sup_{k} \vert g(k+1)- g(k) \vert \leq K$.

The next proposition involves the mean current from $j$ to $j-1$, which is defined as the expected number of %%@
particles that move from $j$ to $j-1$ during one unit of time minus the expected number of particles that move in the %%@
opposite direction. We shall also use the shorthand notation $g!(m)=\prod_{j=1}^{m} g(j)$ for $m>0$ and $g!(0)=1$.
\theoremstyle{definition}
\newtheorem{prop}{Proposition}
\begin{prop}
The stationary distribution of the zero-range process generated by (\ref{Lnf}) is the product measure $\mu (\eta) = %%@
\prod_{j=1}^{n} \nu_{\phi_j}(\eta_j)$ with 
{\setlength\arraycolsep{2pt}
\begin{eqnarray}
\label{statgeneral1} 
\nu_{\phi}(m) &=& \frac{1}{Z(\phi)} \frac{\phi^{m}}{g!(m)},\ \ Z(\phi)= \sum_{m=0}^{\infty} \frac{\phi^{m}}{g!(m)} \\
\label{statgeneralphi}
\phi_j &=& \frac{v_0}{v_j} \prod_{i=1}^{j} \frac{v_i}{u_i} + \frac{c}{v_j} \sum_{i=0}^{j-1} \prod_{k=0}^{i} %%@
\frac{v_{j-k}}{u_{j-k}}, \\
\label{statgeneralc}
&& c = \frac{u_{n+1}- v_0 \prod_{i=1}^{n} \frac{v_i}{u_i} }{1+ \sum_{i=0}^{n-1} \prod_{k=0}^{i} %%@
\frac{v_{n-k}}{u_{n-k}}} ,
\end{eqnarray}}
given that the {\it fugacities} $\phi_j$ satisfy $\phi_j< \liminf_{m\to \infty} g(m)$ for $j\in [n]$. Here $c$ equals %%@
the mean current from $j$ to $j-1$ for any pair of sites with $j\in [n+1]$.
\end{prop}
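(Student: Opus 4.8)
The plan is to verify directly that the proposed product measure $\mu$ is annihilated by the adjoint generator, $L_n^*\mu(\eta)=0$ for every $\eta\in\mathbb{Z}_+^n$, and then to recover the explicit formulas by solving the recursion that this condition imposes on the fugacities. Because the process is an irreducible Markov jump process on $\mathbb{Z}_+^n$ (the reservoir at $0$ injects at rate $v_0>0$, and since $g(k)\ge\epsilon$ for $k>0$ any particle can be moved to site $1$ and ejected), a stationary probability measure is unique as soon as it exists, so it is enough to exhibit one; and the factors $\nu_\phi$ are legitimate probability measures precisely because the hypothesis $\phi_j<\liminf_m g(m)$ keeps each $\phi_j$ strictly below the radius of convergence of the series $Z$.

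The one identity needed is $g(m)\,\nu_\phi(m)=\phi\,\nu_\phi(m-1)$ for $m\ge 1$, immediate from $g!(m)=g(m)\,g!(m-1)$; in particular $E_{\nu_\phi}[g]=\phi$. With it I would evaluate, bond by bond, the contribution of each term in (\ref{Lnf}) to $L_n^*\mu(\eta)$. For an interior bond $(i,i+1)$ the transitions into $\eta$ are $\eta^{i+1,i}\to\eta$ at rate $v_i g(\eta_i+1)$ and $\eta^{i,i+1}\to\eta$ at rate $u_{i+1}g(\eta_{i+1}+1)$, the transitions out of $\eta$ carry rates $v_i g(\eta_i)$ and $u_{i+1}g(\eta_{i+1})$, and substituting the ratios $\mu(\eta^{i+1,i})/\mu(\eta)$ and $\mu(\eta^{i,i+1})/\mu(\eta)$ obtained from the identity collapses the four terms to $\mu(\eta)\,(v_i\phi_i-u_{i+1}\phi_{i+1})\bigl(g(\eta_{i+1})/\phi_{i+1}-g(\eta_i)/\phi_i\bigr)$. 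The two boundary bonds are treated the same way and fit the same template under the formal conventions $\phi_0=\phi_{n+1}=1$. Writing $d_i=v_i\phi_i-u_{i+1}\phi_{i+1}$ for $i=0,\dots,n$ and $w_i(\eta)=g(\eta_i)/\phi_i$ with $w_0=w_{n+1}=1$, one is left with
\begin{equation}
L_n^*\mu(\eta)=\mu(\eta)\sum_{i=0}^{n}d_i\bigl(w_{i+1}(\eta)-w_i(\eta)\bigr)=\mu(\eta)\Bigl[\sum_{i=1}^{n}(d_{i-1}-d_i)\,w_i(\eta)+d_n-d_0\Bigr].
\end{equation}
Testing this against $\eta=0$ and the one-particle configurations $\eta=e_i$ (for which $w_i(\eta)=g(1)/\phi_i\neq 0$) shows that $L_n^*\mu\equiv 0$ holds if and only if $d_0=d_1=\cdots=d_n$; denoting the common value by $-c$, the condition is $u_j\phi_j-v_{j-1}\phi_{j-1}=c$ for every bond $j\in[n+1]$ (with $\phi_0=\phi_{n+1}=1$), i.e.\ exactly that the mean current from $j$ to $j-1$, which in the stationary state equals $u_jE_\mu[g(\eta_j)]-v_{j-1}E_\mu[g(\eta_{j-1})]=u_j\phi_j-v_{j-1}\phi_{j-1}$, is the same across all bonds and equal to $c$.

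Finally I would solve this first-order linear recursion: with $\rho_j=v_j\phi_j$ it becomes $\rho_j=(v_j/u_j)(\rho_{j-1}+c)$, $\rho_0=v_0$, so telescoping yields $\rho_j=v_0\prod_{k=1}^{j}(v_k/u_k)+c\sum_{i=0}^{j-1}\prod_{l=0}^{i}(v_{j-l}/u_{j-l})$, and dividing by $v_j$ gives (\ref{statgeneralphi}); the remaining boundary relation $\rho_n=u_{n+1}-c$ then determines $c$ as in (\ref{statgeneralc}). The computation is routine once the identity $g\,\nu_\phi=\phi\,\nu_\phi(\cdot-1)$ is in place; the only points that call for care are checking that the two boundary bonds really do reduce to the interior template with $\phi_0=\phi_{n+1}=1$ rather than leaving over spurious terms, and noting that constancy of the current is not merely necessary but also sufficient, so that the product measure built from (\ref{statgeneralphi}) is genuinely stationary and hence, by irreducibility, the stationary distribution.
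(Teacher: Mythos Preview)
Your proof is correct and follows essentially the same route as the paper: verify the stationarity condition $\mu[L_n f]=0$ (equivalently $L_n^*\mu=0$) by direct substitution of the product form, reduce via the identity $g(m)\nu_\phi(m)=\phi\,\nu_\phi(m-1)$ to the requirement that the bond current $u_j\phi_j-v_{j-1}\phi_{j-1}$ be constant in $j$, and solve the resulting first-order recursion. Your bond-by-bond organization with the Abel summation and the test configurations $\eta=0,\,e_i$ makes the necessity of current constancy explicit, whereas the paper simply writes down the balance (``traffic'') equations and their solution; but the substance is the same.
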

\begin{proof}
We show that $E^{\mu} \lbrack (L_n f)(X(t)) \rbrack = \mu \lbrack L_n f \rbrack =0$ for bounded $f$. By a change of %%@
summation index, this holds if $\mu$ satisfies the balance equations 
{\setlength\arraycolsep{2pt}
\begin{eqnarray}
&& \lbrack u_1 g(\eta_1 +1) \mu (\eta^{0,1}) - v_0 \mu (\eta) \rbrack +  
\lbrack v_n g(\eta_n +1) \mu (\eta^{n+1,n}) - u_{n+1} \mu (\eta) \rbrack \\
&& {}+ \sum_{j=1}^n \,\lbrack v_{j-1} g(\eta_{j-1}+1) \mu (\eta^{j,j-1}) + u_{j+1} g(\eta_{j+1}+1) \mu (\eta^{j,j+1}) %%@
- (v_j + u_j) g(\eta_j) \mu(\eta) \rbrack = 0 \nonumber 
\end{eqnarray}}
for $\eta \in \mathbb{Z}_+^n$. Substitution of the product form with (\ref{statgeneral1}) and some simple algebra 
leads to traffic equations for the fugacities $\phi_j$:
{\setlength\arraycolsep{2pt}
\begin{eqnarray}
\label{trafficgeneral}
u_{j+1} \phi_{j+1} - v_j \phi_j &=& u_j \phi_j - v_{j-1} \phi_{j-1} \\
u_1\phi_1 - v_0 &=& u_{n+1} - v_n \phi_n  ,
\end{eqnarray}}
the solution to which is given by (\ref{statgeneralphi}) and (\ref{statgeneralc}) with $c$ equal to the quantities on %%@
both sides of the equations. The expected number of particles that leave site $j$ in a time interval of unit length %%@
is $\sum_{\eta} (u_j+v_j )g(\eta_j) \mu(\eta) = (u_j+v_j) \phi_j$. So $c$ really is the mean current. 
\end{proof}

Given that $\phi_j< \liminf_{m\to \infty} g(m)$ for $j\in [n]$, the expected number of particles at site $j$ is %%@
finite and given by the expectation of a $\nu_{\phi_j}$-distributed random variable,
\begin{equation}
\rho_j := R(\phi_j) := \phi_j Z'(\phi_j)/Z(\phi_j).
\end{equation}
Notice that the function $R$ is strictly increasing because $\phi R'(\phi)$ is the variance of the distribution 
$\nu_{\phi}$, which is positive by the assumption that the increments of the interaction $g$ are bounded.  

Formula (\ref{statgeneralc}) shows that, in general, there is a current of particles due to the imbalance of the %%@
fields $(u_i)$ and $(v_i)$, which means that the process is asymmetric with respect to reversal of time. Since the %%@
time reversed process is generated by the adjoint of $L_n$ in $L^2 (\mu)$, that is $\mu \lbrack h (L_n f) \rbrack = %%@
\mu \lbrack (L_n^{\ast} h)  f \rbrack$, change-of-variable calculations lead to the following:  
\begin{prop}
The time reversal of $(X(t))$ with respect to $\mu$ is a zero-range process with the generator 
{\setlength\arraycolsep{2pt}
\begin{eqnarray}
\label{Lnastf}
L_n^{\ast} f (\eta) &=& u_1 \phi_1 \lbrack f(\eta^{0,1}) - f(\eta) \rbrack +  v_{n} \phi_n \lbrack f(\eta^{n+1,n}) - %%@
f(\eta) \rbrack  \nonumber \\ 
& & + \sum_{i=1}^{n} g(\eta_i) \Big\lbrace u_{i+1} \frac{\phi_{i+1}}{\phi_i}  \lbrack f(\eta^{i,i+1}) - f(\eta) %%@
\rbrack +  v_{i-1} \frac{\phi_{i-1}}{\phi_i} \lbrack f(\eta^{i,i-1}) - f(\eta) \rbrack \Big\rbrace,
\end{eqnarray}}
i.e.\ it has single particle rates $v_i^{\ast}=u_{i+1}\phi_{i+1} /\phi_i$ and $u_i^{\ast}=v_{i-1}\phi_{i-1} /\phi_i$
with the convention that $\phi_0= \phi_{n+1}=1$.
\end{prop}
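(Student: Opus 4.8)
The plan is to extract $L_n^{\ast}$ directly from its defining property $\mu[h(L_nf)]=\mu[(L_n^{\ast}h)f]$, by carrying out, term by term in (\ref{Lnf}), a change of the summation index that shifts each increment of $f$ onto $h$ and reshapes the weight $\mu$. The only algebraic input needed is the elementary identity $g(m)\,\nu_{\phi}(m)=\phi\,\nu_{\phi}(m-1)$ for $m\geq 1$, read off the explicit formula (\ref{statgeneral1}) for $\nu_{\phi}$.

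Consider a generic bulk right-jump term, $\sum_{\eta}v_i\,g(\eta_i)\,f(\eta^{i,i+1})\,h(\eta)\,\mu(\eta)$ with $1\leq i\leq n-1$. Because $g(0)=0$, only configurations with $\eta_i\geq 1$ contribute, and on that set $\eta\mapsto\zeta:=\eta^{i,i+1}$ is a bijection onto $\{\zeta:\zeta_{i+1}\geq 1\}$ with inverse $\eta=\zeta^{i+1,i}$. Applying the identity at the two affected coordinates, $g(\eta_i)\,\mu(\eta)=g(\zeta_i+1)\,\nu_{\phi_i}(\zeta_i+1)\,\nu_{\phi_{i+1}}(\zeta_{i+1}-1)\prod_{k\neq i,i+1}\nu_{\phi_k}(\zeta_k)=(\phi_i/\phi_{i+1})\,g(\zeta_{i+1})\,\mu(\zeta)$, so the term equals $\sum_{\zeta}(v_i\phi_i/\phi_{i+1})\,g(\zeta_{i+1})\,h(\zeta^{i+1,i})\,f(\zeta)\,\mu(\zeta)$. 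This is precisely the contribution to $\mu[(L_n^{\ast}h)f]$ of the left-jump part of (\ref{Lnastf}) at site $i+1$, i.e.\ of $g(\eta_i)\,u_i^{\ast}\,[h(\eta^{i,i-1})-h(\eta)]$ with $u_i^{\ast}=v_{i-1}\phi_{i-1}/\phi_i$ after relabelling; the bulk left-jump terms are treated symmetrically and yield $v_i^{\ast}=u_{i+1}\phi_{i+1}/\phi_i$. The boundary and reservoir terms are the same computation under the conventions $\phi_0=\phi_{n+1}=1$: for the left injection $v_0[f(\eta^{0,1})-f(\eta)]$ the substitution $\zeta=\eta^{0,1}$ gives $\mu(\eta)=(g(\zeta_1)/\phi_1)\,\mu(\zeta)$, producing $(v_0/\phi_1)\,g(\zeta_1)\,h(\zeta^{1,0})\,f(\zeta)\,\mu(\zeta)$, a left ejection from site $1$ at rate $u_1^{\ast}=v_0\phi_0/\phi_1$; the ejection term $u_1\,g(\eta_1)[f(\eta^{1,0})-f(\eta)]$ instead turns into a left-reservoir injection at rate $v_0^{\ast}=u_1\phi_1/\phi_0$, and likewise at the right edge one obtains $v_n^{\ast}=u_{n+1}/\phi_n$ and $u_{n+1}^{\ast}=v_n\phi_n$, exactly the boundary rates in (\ref{Lnastf}).

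It then remains to match the diagonal $-f(\eta)$ pieces. Summing them directly, one needs $v_i^{\ast}+u_i^{\ast}=v_i+u_i$ and $v_0^{\ast}+u_{n+1}^{\ast}=v_0+u_{n+1}$, both of which are immediate rearrangements of the traffic equations (\ref{trafficgeneral}); equivalently this is the general fact that $L_n^{\ast}\mathbf 1=0$ because $\mu$ is $L_n$-invariant, which pins down the diagonal once all off-diagonal rates are known. The whole argument can also be packaged through the reversed-chain rate formula $c^{\ast}(\eta,\zeta)=\mu(\zeta)\,c(\zeta,\eta)/\mu(\eta)$, which here evaluates to the coefficients in (\ref{Lnastf}). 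Since every $\phi_j>0$, the rates $u_i^{\ast},v_i^{\ast}$ are positive and bounded, so $L_n^{\ast}$ indeed generates a zero-range process. I do not expect a genuine obstacle: the content is entirely the careful bookkeeping of the four families of transitions, the point $g(0)=0$ ensuring that each reindexing is an exact bijection, and the single use of (\ref{trafficgeneral}) for the diagonal.
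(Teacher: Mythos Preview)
Your proof is correct and is precisely the ``change-of-variable calculations'' the paper alludes to but does not spell out: you use the identity $g(m)\nu_\phi(m)=\phi\,\nu_\phi(m-1)$ to reindex each transition term and then invoke the traffic equations (\ref{trafficgeneral}) to match the diagonal, which is the standard route. There is nothing to add.
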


This shows that the process is reversible if and only if the detailed balance conditions %%@
$v_i\phi_i=u_{i+1}\phi_{i+1}$ hold, i.e. $c=0$.

In this article, we shall restrict our attention to symmetric media. This means $u_{j+1} = v_j$ for $j\in [n-1]$
and possible exceptions on the boundaries, where we take $v_{0} \equiv v$ as the input rate to the system from the %%@
left reservoir and $u_{n+1} \equiv u$ from the right. The output rates at the boundaries are chosen to equal unity, %%@
$u_1= v_n =1$. Define 
\begin{equation}
\label{s_j}
s_j = 1+ \sum_{i=1}^{j-1} v_i^{-1},
\end{equation}
with $s_1=1$. Then proposition $1$ yields the following:
\newtheorem{coro}{Corollary} 
\begin{coro}
For symmetric media with boundary fields $u,v$, the stationary distribution is of the product form as in Proposition %%@
$1$ with
{\setlength\arraycolsep{2pt}
\begin{eqnarray}
\label{fuga_i}
\phi_i &=& u \frac{s_i}{s_{n+1}} + v \Big(1-\frac{s_i}{s_{n+1}} \Big), \\
&&\ \ c = \frac{u-v}{s_{n+1}},
\end{eqnarray}}
given that $\phi_1 \vee \phi_n < \liminf g(k)$.
\end{coro}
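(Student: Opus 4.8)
The plan is to substitute the symmetric rates $u_{j+1}=v_j$ ($j\in[n-1]$), together with the boundary normalizations $u_1=v_n=1$, $v_0=v$ and $u_{n+1}=u$, directly into the closed-form stationary solution (\ref{statgeneralphi})--(\ref{statgeneralc}) supplied by Proposition~1, and then to collapse the products that appear by telescoping. With these choices the single-particle ratio $v_i/u_i$ equals $v_1$ for $i=1$ (because $u_1=1$) and $v_i/v_{i-1}$ for $2\le i\le n$ (because $u_i=v_{i-1}$), so the finite products telescope: $\prod_{i=1}^{j}v_i/u_i=v_j$ for every $j\in[n]$, and, reading $\prod_{k=0}^{i}v_{j-k}/u_{j-k}$ from $k=i$ downward, one obtains $v_j/v_{j-i-1}$ when $j-i-1\ge1$ and simply $v_j$ when $i=j-1$. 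Summing over $0\le i\le j-1$ and invoking the definition (\ref{s_j}) of $s_j$ yields $\sum_{i=0}^{j-1}\prod_{k=0}^{i}v_{j-k}/u_{j-k}=v_j\bigl(1+\sum_{l=1}^{j-1}v_l^{-1}\bigr)=v_j\,s_j$.

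Feeding these identities into (\ref{statgeneralphi}), the prefactors $1/v_j$ cancel and leave $\phi_j=v_0+c\,s_j=v+c\,s_j$. In (\ref{statgeneralc}) the numerator becomes $u-v\prod_{i=1}^{n}v_i/u_i=u-v$, since $\prod_{i=1}^{n}v_i/u_i=v_n=1$, while the denominator, obtained from the same summation with $j=n$, equals $1+v_n\,s_n=1+s_n=s_{n+1}$ (the last equality because $s_{n+1}=s_n+v_n^{-1}=s_n+1$). Hence $c=(u-v)/s_{n+1}$, and putting this back into $\phi_j=v+c\,s_j$ gives $\phi_j=v+(u-v)\,s_j/s_{n+1}=u\,s_j/s_{n+1}+v\bigl(1-s_j/s_{n+1}\bigr)$, which is (\ref{fuga_i}).

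It remains to note that the displayed hypothesis $\phi_1\vee\phi_n<\liminf g(k)$ is exactly what is needed to apply Proposition~1, namely $\phi_j<\liminf g(k)$ for every $j\in[n]$. Since all $v_i$ are positive, $(s_j)$ is strictly increasing, so $s_j/s_{n+1}\in(0,1)$ for $j\in[n]$ and $\phi_j$ is an affine --- hence monotone --- function of $s_j$ interpolating between $v$ and $u$; its extreme values over $j\in[n]$ are therefore attained at $j=1$ and $j=n$, so that $\max_{j\in[n]}\phi_j=\phi_1\vee\phi_n$. The corollary is then immediate from Proposition~1. There is no genuine obstacle; the one point that demands care is that the relation $u_m=v_{m-1}$ holds only for $2\le m\le n$, whereas $u_1=1\ne v_0$, so the telescoping products terminate with denominator $u_1=1$ rather than with $v_0$ --- keeping track of this is what makes the cancellations above come out cleanly.
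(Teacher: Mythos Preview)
Your proof is correct and follows exactly the approach the paper intends: the corollary is presented there as an immediate consequence of Proposition~1, with the remark that $\phi_i$ is monotone in $i$ so that only the boundary values need to be controlled. You have simply spelled out the telescoping substitution and the monotonicity argument that the paper leaves implicit.
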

Naturally, the process is time reversible if and only if the boundary rates coincide, and then the fugacity takes a %%@
constant value equal to the input rates. For general $u$ and $v$, $\phi_i$ is a monotone function of the spatial %%@
position, and therefore a condition only at its boundary values need be imposed to guarantee the existence of the %%@
stationary distribution. In the following sections, we shall adopt a stricter condition, $u \vee v < \liminf g(k)$, %%@
than the one used in the corollary.  

\subsection{Symmetric random media}
\label{SecSrm}

\newcommand{\Prm}{\mathbb{P}}
\newcommand{\Erm}{\mathbb{E}}
\newcommand{\ud}{\mathrm{d}}
We shall now consider zero-range processes in symmetric random environments with quenched disorder. This means that %%@
the environment does not evolve in time. For this purpose, let the single particle rates $(v_i)_{i=1}^{n-1}$ for %%@
jumps between $i$ and $i+1$ in the bulk be independent and identically distributed random variables on some %%@
probability space with measure $\Prm$. Expectations with respect to this measure will be denoted by $\Erm$. The %%@
symbols $P$ and $E$ are reserved for the Markov process in a fixed environment. Again, the assumption of positive and %%@
bounded rates is made. Each random environment is equipped with the same boundary rates as in the setting of %%@
corollary 1: The input rate $v_0$ from the left reservoir to site $1$ is denoted more conveniently by $v$, the rate %%@
$u_{n+1}$ from the right reservoir to site $n$ by $u$, and the output rates from sites $1$ and $n$ to the left and %%@
right reservoirs, that is $u_1$ and $v_n$ respectively, are set to unity. The sufficient condition 
\begin{equation}
\label{uvcond}
u \vee v < \liminf g(k)
\end{equation} 
for the existence of the stationary state is always assumed, and unless stated otherwise, the equilibrium case $u=v$ %%@
is excluded from considerations. 

\subsubsection{Weak disorder: ${\Erm} v_i^{-1} < \infty$}
\label{SecStatLessThanInfty}

The existence of expectations of the inverse bulk rates determines the large scale structure of the system under %%@
study. In fact, the situation is rather simple for ${\Erm} v_i^{-1} < \infty$: Defining a rescaled continuum analogue %%@
of Eqn.\ (\ref{s_j}) by   
\begin{equation}
\label{S_nDef}
S_{n,\alpha} (x) := n^{-1/\alpha} \Big(\, 1 +\!\!\! \sum_{i=1}^{\lfloor (n+1)x \rfloor -1}\!\! v_i^{-1} \Big), \qquad %%@
x\in \lbrack 0, 1\rbrack,
\end{equation}
and choosing $\alpha=1$ yields that $S_{n,1} (x) \to x\Erm v_i^{-1}$ for $\Prm$-almost all environments as $n\to %%@
\infty$. The convergence is even uniform because the functions $S_{n,\alpha}(x)$ are increasing, and the limiting %%@
function is continuous and increasing. Thus the continuum version of the fugacity satisfies
\begin{equation}
\phi_n(x):= u \frac{S_{n,1} (x)}{S_{n,1} (1)} + v \Big(1-\frac{S_{n,1} (x)}{S_{n,1} (1)} \Big) \longrightarrow %%@
ux+v(1-x)
\end{equation}
uniformly almost surely.
Moreover, the current is inversely proportional to the system size,
\begin{equation}
n c =  \frac{u-v}{S_{n,1} (1)} \longrightarrow \frac{u-v}{{\Erm} v_i^{-1}}.
\end{equation}
The continuity of the map $R(\phi) = \phi Z'(\phi)/Z(\phi)$ on $[u\wedge v, u\vee v]$ implies the uniform convergence %%@
of the particle density,
\begin{equation}
\rho_n (x) := (R\circ \phi_n)(x) \longrightarrow  R(ux+v(1-x))
\end{equation}
$\Prm$-almost surely.
As shown in \cite{Bertini02}, the fluctuations of the macroscopic density profile are described by the free energy of %%@
an equilibrium system with spatially varying mean waiting times $\Erm (v_i + u_i)^{-1}$ used to create the same %%@
fugacity profile. This should be contrasted with more complicated non-equilibrium systems, such as exclusion %%@
processes, with non-local large deviation functions \cite{Derrida02}.

\subsubsection{Strong disorder: $\Erm v_i^{-1} = \infty$}
\label{SecStatInfty}

We now turn to the nontrivial case of $\Erm v_i^{-1} = \infty$. In particular, we assume that $(v_i^{-1})$ belong to %%@
the domain of attraction of a strictly $\alpha$-stable distribution \cite{Sato99,Ibragimov71}, with $\alpha \in %%@
(0,1)$, and 'strictly' meaning absence of a deterministic component as in Sato \cite{Sato99}. In other words, there %%@
is a slowly varying function $L$ such that $(n L(n))^{-1/\alpha}\sum_{i=1}^n v_i^{-1}$ converges in distribution to a %%@
strictly $\alpha$-stable random variable $S$ as $n\to \infty$, or equivalently (see \cite{Ibragimov71} for a proof), %%@
the tail of the distribution of the $v_i^{-1}$ is heavy with exponent $\alpha$. 
For simplicity, we take $L(n)=1$. The sums of $v_i^{-1}$ being almost surely positive, the properly rescaled current %%@
converges in distribution by theorem 5.2 of reference \cite{Billingsley68}, 
\begin{equation}
n^{1/\alpha} c \longrightarrow (u-v)/S.
\end{equation}
There is also a limit theorem for the fugacity as a stochastic process. The processes $\lbrace S_{n,\alpha} (x), %%@
x\in[0,1] \rbrace$, defined by equation (\ref{S_nDef}), as elements of the space $D\lbrack 0,1\rbrack$ of %%@
right-continuous functions on $\lbrack 0,1\rbrack$ with left-hand limits, converge in the Skorohod topology (implied %%@
by the metric $d(\phi,\psi) = \inf\lbrace \sup_x \vert \phi(x)- \psi(\lambda(x))\vert \vee \sup_x \vert x- %%@
\lambda(x)\vert : \lambda \ {\textrm{continuous and one-to-one}} \rbrace$ %%@
\cite{Skorohod56,Skorohod57,Gihman74,Billingsley68}) to a strictly $\alpha$-stable subordinator (increasing L\'evy %%@
process \cite{Sato99,Bertoin96}) $\lbrace S(x), x\in[0,1] \rbrace$. 
The set of discontinuities of the map $\lbrace S_{n,\alpha} (x), x\in[0,1] \rbrace \mapsto \lbrace S_{n,\alpha} %%@
(x)/S_{n,\alpha} (1), x\in[0,1] \rbrace$ from the space $D\lbrack 0,1\rbrack$ to itself consists, in this topology, %%@
only of the path that is identically zero, and is thus of null measure for the process $S$. Then, by theorem 5.1 of %%@
reference \cite{Billingsley68}, the fugacity processes $\phi_n(x)$ converge in distribution to
\begin{equation}
\phi(x) = u \frac{S(x)}{S(1)} + v \Big(1-\frac{S(x)}{S(1)} \Big),
\end{equation}
where the normalized strictly $\alpha$-stable subordinator $S(x)/S(1)$ on $\lbrack 0,1 \rbrack$ is also a random %%@
distribution function. This object is rather well known in mathematical statistics as a part of a family generalizing %%@
the Dirichlet process \cite{Pitman02Notes}. The next theorem gives the one-point statistics of the fugacity.
\newtheorem{thm}{Theorem}
\begin{thm}
The distribution function of the fugacity $\phi$ at $x\in (0,1)$ is 
\begin{equation}
\label{phidf}
\Prm (\phi(x)\leq \theta) = \frac{1}{2} - \frac{\textrm{sgn}(u-v)}{\pi \alpha} \arctan \bigg( \frac{x\vert u-\theta %%@
\vert^{\alpha} -(1-x)\vert \theta - v \vert^{\alpha}}{x\vert u-\theta \vert^{\alpha} + (1-x)\vert \theta - v %%@
\vert^{\alpha}} \tan  \frac{\pi\alpha}{2} \bigg)
\end{equation}
and its moments can be calculated from the formula 
\begin{equation}
\Erm\phi(x)^n =v^n + \sum_{k=1}^n {{n}\choose{k}} v^{n-k} (u-v)^k \sum_{j=1}^k x^j q_{\alpha,0}(k,j),
\end{equation}
where  
\begin{equation}
\label{qalpha0}
q_{\alpha,0}(k,j) = k (j-1)! \alpha^{j-1} \sum \prod_{i=1}^k \frac{((1-\alpha)(2-\alpha)\cdots %%@
(i-1-\alpha))^{m_i}}{m_i!(i!)^{m_i}}.
\end{equation}
The summation is over the sequences of nonnegative integers $(m_i)_{i=1}^k$ such that $\sum_{i=1}^k m_i =j$ and %%@
$\sum_{i=1}^k i m_i =k$.
\end{thm}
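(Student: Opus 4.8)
The plan is to reduce both assertions to the law of the single marginal $U(x):=S(x)/S(1)$ of the normalized subordinator, since $\phi(x)=v+(u-v)U(x)$. By stationarity and independence of the increments of the L\'evy process $S$, the variables $S(x)$ and $S(1)-S(x)$ are independent, and self-similarity of a strictly $\alpha$-stable subordinator gives $S(x)\stackrel{d}{=}x^{1/\alpha}\sigma_1$ and $S(1)-S(x)\stackrel{d}{=}(1-x)^{1/\alpha}\sigma_2$ with $\sigma_1,\sigma_2$ independent copies of $S(1)$, which we normalize so that $\Erm e^{-\lambda S(1)}=e^{-\lambda^\alpha}$ (the overall scale cancels in $U(x)$). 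Hence $U(x)\stackrel{d}{=}\bigl(1+((1-x)/x)^{1/\alpha}W\bigr)^{-1}$ with $W:=\sigma_2/\sigma_1$, and the density of the stable ratio $W$ is the classical $f_W(w)=\frac{\sin(\pi\alpha)}{\pi}\frac{w^{\alpha-1}}{w^{2\alpha}+2w^\alpha\cos\pi\alpha+1}$ on $(0,\infty)$, which one can either quote or obtain quickly from $\Erm\sigma^s=\Gamma(1-s/\alpha)/\Gamma(1-s)$ and the reflection formula (the Mellin transform of $f_W$ is $\sin(\pi s)/(\alpha\sin(\pi s/\alpha))$).

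For the distribution function, fix $\theta$ strictly between $u$ and $v$. When $u>v$ set $\tau:=(\theta-v)/(u-v)\in(0,1)$, so that $\{\phi(x)\le\theta\}=\{U(x)\le\tau\}$; when $u<v$ one has instead $\{\phi(x)\le\theta\}=\{U(x)\ge\tau\}$ with $\tau:=(v-\theta)/(v-u)$, and passing to complements there accounts for the factor $\textrm{sgn}(u-v)$. In either case $\{U(x)\le\tau\}=\{W\ge w_0\}$ with $w_0^\alpha=\frac{x(1-\tau)^\alpha}{(1-x)\tau^\alpha}$, so $\Prm(W\ge w_0)=\frac{\sin(\pi\alpha)}{\pi}\int_{w_0}^\infty f_W$; the substitution $y=w^\alpha$ together with $y^2+2y\cos\pi\alpha+1=(y+\cos\pi\alpha)^2+\sin^2\pi\alpha$ evaluates this to $\frac{1}{2\alpha}-\frac{1}{\pi\alpha}\arctan\frac{w_0^\alpha+\cos\pi\alpha}{\sin\pi\alpha}$. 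Writing $p=x|u-\theta|^\alpha$ and $q=(1-x)|\theta-v|^\alpha$, so that $w_0^\alpha=p/q$, a half-angle identity recasts this as the claimed expression; concretely one verifies $\arctan\frac{p+q\cos\pi\alpha}{q\sin\pi\alpha}-\arctan\bigl(\frac{p-q}{p+q}\tan\frac{\pi\alpha}{2}\bigr)=\frac{\pi(1-\alpha)}{2}$ by computing the tangent of the left-hand side (it equals $\cot\frac{\pi\alpha}{2}$) and fixing the branch at $p=q$, or in the limit $p\to\infty$.

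For the moments, binomial expansion gives $\Erm\phi(x)^n=\sum_{k=0}^n\binom{n}{k}v^{n-k}(u-v)^k\,\Erm U(x)^k$, so it suffices to identify $\Erm U(x)^k=\sum_{j=1}^k x^j q_{\alpha,0}(k,j)$ for $k\ge1$. I would use $S(1)^{-k}=\frac{1}{\Gamma(k)}\int_0^\infty\lambda^{k-1}e^{-\lambda S(1)}\,\ud\lambda$, split $S(1)=S(x)+(S(1)-S(x))$, and invoke independence of the increments together with $\Erm\bigl[S(x)^k e^{-\lambda S(x)}\bigr]=(-1)^k\partial_\lambda^k e^{-x\lambda^\alpha}$ and $\Erm e^{-\lambda(S(1)-S(x))}=e^{-(1-x)\lambda^\alpha}$. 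The complete Bell polynomial (Fa\`a di Bruno) expansion of $\partial_\lambda^k e^{-x\lambda^\alpha}$, using $\partial_\lambda^i\lambda^\alpha=(-1)^{i-1}\alpha(1-\alpha)(2-\alpha)\cdots(i-1-\alpha)\lambda^{\alpha-i}$, produces precisely the sum over sequences $(m_i)$ with $\sum_i m_i=j$ and $\sum_i i\,m_i=k$; the exponentials combine to $e^{-\lambda^\alpha}$, and after $u=\lambda^\alpha$ the remaining integral is $\int_0^\infty\lambda^{\alpha j-1}e^{-\lambda^\alpha}\,\ud\lambda=\Gamma(j)/\alpha$. Collecting the constants ($k$ from $k!/(k-1)!$, $(j-1)!$ from $\Gamma(j)$, and a net factor $\alpha^{j-1}$) gives $\Erm U(x)^k=\sum_{j=1}^k x^j q_{\alpha,0}(k,j)$, and substitution into the binomial sum yields the stated formula.

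I expect the only genuine friction to be bookkeeping: in the first part, placing the $\arctan$ on the correct branch so that the free term is exactly $\frac12$; and in the second, keeping the Bell-polynomial indexing consistent so that the factorials, the power of $\alpha$, and the product $(1-\alpha)(2-\alpha)\cdots(i-1-\alpha)$ match $q_{\alpha,0}(k,j)$ term by term. Conceptually, everything follows from the reduction to $U(x)$ and the two standard inputs, the stable-ratio density and the Laplace transform $\Erm e^{-\lambda S(1)}=e^{-\lambda^\alpha}$.
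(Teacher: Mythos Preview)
Your argument is correct. The moment computation is essentially the paper's: both write $S(1)^{-k}$ as a Laplace integral (the paper uses the iterated form $\int_0^\infty\!\cdots\!\int_0^\infty e^{-S(1)\sum\lambda_i}\,\ud\lambda_1\cdots\ud\lambda_k$ and then collapses it to a single radial integral, you start from $\tfrac{1}{\Gamma(k)}\int_0^\infty\lambda^{k-1}e^{-\lambda S(1)}\,\ud\lambda$ directly), use independence of increments to split the expectation, and apply Fa\`a di Bruno to $\partial_\lambda^k e^{-x\lambda^\alpha}$; the bookkeeping you outline lands on $q_{\alpha,0}(k,j)$ exactly as in the paper.

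For the distribution function your route is genuinely different. The paper identifies $S(x)/S(1)$ as the mean of the indicator $I\{\cdot\le x\}$ against the normalized random measure associated with $S$, and then quotes the Regazzini--Lijoi--Pr\"unster formula (derived via the Gurland inversion formula for characteristic functions) to read off $\Prm(\phi(x)\le\theta)$ in one stroke. You instead exploit self-similarity and independence of increments to write $U(x)\stackrel{d}{=}\bigl(1+((1-x)/x)^{1/\alpha}W\bigr)^{-1}$ with $W$ the ratio of two independent positive stable variables, input the known Lamperti-type density $f_W(w)=\tfrac{\sin\pi\alpha}{\pi}\,w^{\alpha-1}/(w^{2\alpha}+2w^\alpha\cos\pi\alpha+1)$, and integrate explicitly; the residual trigonometric identity $\arctan\tfrac{p+q\cos\pi\alpha}{q\sin\pi\alpha}-\arctan\bigl(\tfrac{p-q}{p+q}\tan\tfrac{\pi\alpha}{2}\bigr)=\tfrac{\pi(1-\alpha)}{2}$ is routine once checked at $p=q$ and $p\to\infty$. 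Your approach is more elementary and self-contained (no external reference needed beyond the stable-ratio density, which you indicate how to derive via Mellin transforms), while the paper's approach situates the result within the broader framework of normalized random measures and would generalize immediately to other functionals of the form $\psi[f]$.
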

\begin{proof}
Since the normalized subordinator $S(x)/S(1)$ is a random distribution function on the set of real numbers, there is %%@
a random probability measure, to be denoted by $\psi$, associated with it. Regazzini, Lijoi and Pr\"unster %%@
\cite{Regazzini03} have derived expressions for distribution functions of integrals with respect to such measures %%@
using the Gurland inversion formula \cite{Gurland48}
\begin{equation}
F(\xi ) +F(\xi -) = 1 - \frac{1}{i \pi} \lim_{\substack{\epsilon\to 0 \\ T \to \infty}} \int_{\epsilon}^T 
\frac{e^{i t \xi}}{t}\, \varphi (t)\, \ud t
\end{equation}
for a distribution function $F$ and the corresponding charateristic function $\varphi$. Since for $u>v$,
\begin{equation}
\phi(x) \leq \theta \Leftrightarrow \frac{S(x)}{S(1)} \leq \frac{\theta-v}{u-v} 
\Leftrightarrow \psi \lbrack I\lbrace \cdot \leq x \rbrace \rbrack \leq \frac{\theta-v}{u-v},
\end{equation}
choosing $f=I\lbrace \cdot \leq x \rbrace$ in formula (10) on page 575 of reference \cite{Regazzini03}, which in our %%@
case reads
{\setlength\arraycolsep{2pt}
\begin{eqnarray}
\frac{1}{2} \lbrack \Prm (\psi[f] < \theta) &+& \Prm (\psi[f] \leq \theta) \rbrack= \frac{1}{2} \nonumber\\
&-& \frac{1}{\pi \alpha} 
\arctan \left( \frac{\int_0^1 {\mathrm{sgn}} (f(y)-\theta ) \vert f(y)-\theta \vert^{\alpha} \ud y}{\int_0^1 \vert %%@
f(y)-\theta \vert^{\alpha} \ud y} \tan \frac{\pi \alpha}{2} \right),
\end{eqnarray}}
and integrating yields the first statement of the theorem. Notice that there is null probability that $\phi(x)$ takes %%@
a fixed value for $x\in (0,1)$, and thus $\Prm (\phi(x)< \theta) = \Prm (\phi(x) \leq \theta)$.

To prove the second statement, by the binomial theorem we need to show that the $k$-th moment of the normalized %%@
subordinator at $x$ equals the generating function for the sequence $q_{\alpha,0}$(k,j), which is in fact a %%@
probability distribution, as will be explained in remarks after the proof. The use of double subscript shall also be %%@
justified. The statement is verified by the following calculation based on conditioning, Fubini theorem and Fa\`a di %%@
Bruno formula. Here $(\mathcal{F}_x)_{x\geq 0}$ is the natural filtration of the subordinator $S(x)$, and %%@
$\Phi(\lambda)=\lambda^{\alpha}\Phi(1)$ is the Laplace exponent of its (infinitely divisible) distribution at $x=1$.  
{\setlength\arraycolsep{2pt}
\begin{eqnarray}
\lefteqn{ \mathbb{E} \lbrack S(x)^k S(1)^{-k} \rbrack = 
\Erm \Big\lbrack S(x)^k \Erm \Big\lbrack \int_0^{\infty}\!\!\! \cdots  \int_0^{\infty} e^{-S(1) \Sigma_{1}^k
 \lambda_i} 
\,\ud \lambda_1 \cdots \ud \lambda_k \Big\vert \mathcal{F}_x \Big\rbrack \Big\rbrack } \nonumber\\
&=& \Erm \Big\lbrack  \int_0^{\infty}\!\!\! \cdots  \int_0^{\infty} S(x)^k  e^{-S(x)\Sigma_{1}^k \lambda_i- (1-x) %%@
\Phi(\Sigma_{1}^k \lambda_i)} \,\ud \lambda_1 \cdots \ud \lambda_k  \Big\rbrack \nonumber\\
&=& (-1)^k \int_0^{\infty}\!\!\! \cdots  \int_0^{\infty} \Big(\frac{\partial^k}{\partial\lambda_1 \cdots %%@
\partial\lambda_k} e^{-x\Phi(\Sigma_{1}^k \lambda_i)} \Big)\,  e^{-(1-x)\Phi(\Sigma_{1}^k \lambda_i)}\,\ud \lambda_1 %%@
\cdots \ud \lambda_k \nonumber\\
&=& (-1)^k \int_0^{\infty} \frac{r^{k-1}}{(k-1)!} \Big(\frac{\ud^k}{\ud r^k} e^{-x \Phi(r)} \Big)\,  %%@
e^{-(1-x)\Phi(r)}\,\ud r \nonumber\\
&=&  \frac{(-1)^k}{(k-1)!} \int_0^{\infty} r^{-1} e^{-r^{\alpha}\Phi(1)} \sum {{k}\choose{m_1 \cdots m_k}} (-x r^{ %%@
\alpha}\Phi(1))^{\Sigma_1^k m_i} \prod_{i=1}^k \Big(\frac{(\alpha)_{i \downarrow} }{i!}  \Big)^{m_i} \, \ud r %%@
\nonumber\\
&=& \sum x^{\Sigma_1^k m_i} k (-1+\sum_1^k m_i)! \alpha^{(-1+\Sigma_1^k m_i)} \prod_{i=1}^k %%@
\frac{((1-\alpha)(2-\alpha)\cdots (i-1-\alpha))^{m_i}}{m_i!(i!)^{m_i}} \nonumber\\
&=& \sum_{j=1}^k x^j q_{\alpha,0} (k,j), \nonumber
\end{eqnarray}}
where the summation without indices on the two lines before the last one is over the set $\lbrace (m_i)_{i=1}^k  : %%@
m_i\geq 0,\ \sum_{i=1}^k i m_i = k \rbrace$ and $(\alpha)_{i \downarrow} = \alpha (\alpha-1)\cdots (\alpha-i+1)$.
\end{proof}

The double index notation in $q_{\alpha,0}$ comes from the theory of exchangeable random partitions, and is taken %%@
from J. Pitman's lecture notes \cite{Pitman02Notes}. The list of research articles on this topic include %%@
\cite{Perman92,Pitman96,Pitman97,PitmanReport625}. We quickly review here the meaning of the sequence %%@
$q_{\alpha,0}(k,j)$ in the context of partitions.

An exchangeable random partition is a partition of the set of positive integers $\mathbb{N}$ generated by sampling %%@
from a random discrete probability distribution. A partition is identified with the sequence of partitions of %%@
truncated sets $\lbrack n \rbrack = \lbrace 1, \ldots, n\rbrace$ in such a way that the probability of a partition of %%@
$\lbrack n \rbrack$ in $k$ parts with cardinalities $(n_1,\ldots, n_k)$ is given by the symmetric function %%@
\cite{Pitman96} $p(n_1,\ldots, n_k) = \sum_{(j_1,\ldots,j_k )} \Erm  \prod_{i=1}^k P_{j_i}^{n_i}$. 
Here the summation is over all sequences of $k$ distinct positive integers, and $P_i$ is the $i$-th largest atom of %%@
the random sampling distribution. One way of thinking here is through the species sampling models of ecology: A %%@
sample of $n$ individuals from a large population, with ranked relative abundances given by the sequence $(P_i)$, is %%@
partitioned according to the species in that sample. One possibility is to take $(P_i)$ from the statistics of jumps %%@
of an $\alpha$-stable subordinator, which provides the connection to the system under study. A more general scheme %%@
known as the two-parameter family of Poisson--Dirichlet distributions, and denoted by $\mathrm{PD}(\alpha, \theta )$, %%@
is presented in \cite{Pitman97}. The number $q_{\alpha,\theta}(k,j)$ is then the probability that in a sample of size %%@
$k$ from a population with frequencies $\mathrm{PD}(\alpha, \theta )$, there are exactly $j$ species. 

Poisson--Dirichlet distributions appear naturally in the theory of spin-glasses \cite{Talagrand03}. In particular, %%@
the ranked Gibbs weights have this distribution in the thermodynamic limit of Derrida's random energy model %%@
\cite{Derrida81}. $\mathrm{PD}(1/2,0)$ also appears as the statistics of ranked lengths of Brownian excursions, and %%@
$\mathrm{PD}(\alpha,0)$ for excursions of more general objects such as Bessel processes. Therefore one expects a wide %%@
range of applications for these structures in the physics of random systems.

\theoremstyle{remark} \newtheorem*{example}{Example}
\begin{example}
As a specific application of theorem 1, we now determine the density profile near the boundaries for the attractive %%@
interaction $g(k)=I\lbrace k>0 \rbrace$. For $u,v<1$, we get
$R(\phi(x)) = \phi(x)/(1-\phi(x)) = \sum_{k=1}^{\infty} \phi (x)^k$.
By theorem 1, and the fact that the expected number of species in a sample of size $k$ with $\mathrm{PD}(\alpha,0)$ %%@
frequencies is $(\alpha+1)(\alpha+2)\cdots  (\alpha+k-1) / (k-1)!$ \cite{Pitman02Notes}, 
{\setlength\arraycolsep{2pt}
\begin{eqnarray}
\lim_{x\to 1} \frac{\ud}{\ud x} \Erm  R (\phi (x)) &=& \frac{1}{1-v}\sum_{k=1}^{\infty} \left(\frac{u-v}{1-v} %%@
\right)^k \sum_{j=1}^k j q_{\alpha,0}(k,j) \nonumber\\
&=& \frac{u-v}{(1-u)(1-v)}\left(\frac{1-v}{1-u}\right)^{\alpha}. \nonumber 
\end{eqnarray}}
On the other hand, the derivative at the opposite boundary, $x\to 0$, is
\begin{equation}
\frac{1}{1-v}\sum_{k=1}^{\infty} \left(\frac{u-v}{1-v} \right)^k q_{\alpha,0}(k,1) = %%@
\frac{u-v}{(1-u)(1-v)}\left(\frac{1-u}{1-v}\right)^{\alpha} \nonumber 
\end{equation}
because $q_{\alpha,0}(k,1) = (1-\alpha)(2-\alpha)\cdots (k-1-\alpha)/(k-1)!$ by formula (\ref{qalpha0}) of theorem 1.  
\end{example}

\subsubsection{Properties of the fugacity distribution}
\label{SecDistribution}

Let us now discuss some consequences of theorem 1. First of all, the result for the moments shows that the expected %%@
value for the fugacity at $x$ is $\Erm \phi(x) = v(1-x)+ ux$. The formula $\Erm \phi(x)^2 - \left( \Erm \phi(x) %%@
\right)^2 = (1-\alpha)(u-v)^2 x(1-x)$ for the variance allows us to estimate the magnitude of deviations around the %%@
linear profile;
\begin{equation}
\Prm \Big( \vert \phi(x) - v(1-x)- ux \vert > \vert u-v \vert \sqrt{x(1-x)} \Big) \leq 1-\alpha.   
\end{equation}
This matches with the intuitive picture of the fugacity process: By the L\'evy-It\^o decomposition, the %%@
discontinuities are countable and dense on the spatial axis, but for $\alpha$ close to $1$, the jumps do not differ %%@
in size very much, and the profile is usually nearly linear. For $\alpha$ tending to zero, a small number of jumps %%@
dominates the normalized sum in the fugacity, and the system is segregated into regions of nearly constant particle %%@
densities separated by bonds of very low conductivity.

More information on the statistics of the fugacity can be extracted by differentiating the distribution function %%@
(\ref{phidf}). For convenience, we consider the case $v<u$ only, and introduce the notation $\lambda = \theta-v$, %%@
$\xi= u-\theta$. Then the probability density reads
\begin{equation}
\label{f_phi}
f_{\phi(x)}(\theta) = \frac{\sin \pi \alpha}{\pi} \frac{(u-v)x(1-x)\lambda^{\alpha-1} \xi^{\alpha-1}}
{( x \lambda^{\alpha} + (1-x) \xi^{\alpha} )^2 \cos^2\frac{\pi \alpha}{2}+  
( x \lambda^{\alpha} - (1-x) \xi^{\alpha} )^2 \sin^2\frac{\pi \alpha}{2}}.
\end{equation} 
It resembles the generalized arcsine distribution that appears as a solution to some occupation time problems. A %%@
common feature of the two densities is that they both diverge at the boundaries, which is perhaps unexpected for %%@
$\alpha$ close to $1$ in the present case. The main difference is that the arcsine density is convex, whereas the %%@
density given above has two minima for large enough $\alpha$. Taking into account the divergence at the boundaries %%@
and confinement within $ux+v(1-x) \pm (u-v)\sqrt{x(1-x)}$ as $\alpha \to 1$, this is no longer a surprise. 

It is easily seen that the number of extrema of the density (\ref{f_phi}) does not depend on the actual values of the %%@
boundary fields, as long as they are not equal. So without loss of generality, we choose $v=0$ and $u=1$. The number %%@
of extrema within $(0,1)$ is then given by the number of solutions to
\begin{equation}
(2\theta -1 -\alpha)\theta^{2\alpha}(1-x)^2 + (2\theta-1+\alpha)(1-\theta)^{2\alpha}x^2 + 2(2\theta %%@
-1)(1-\theta)^{\alpha}\theta^{\alpha}(1-x)x \cos \pi \alpha= 0.
\end{equation}
Let us concentrate for simplicity on the statistics in the middle of the system, i.e.\ $x=1/2$. Writing $\delta = %%@
\theta - 1/2$, we get
\begin{equation}
\label{extrema}
\left( \frac{1}{2}+ \delta \right)^{2\alpha} (2\delta -\alpha) + \left( \frac{1}{2} - \delta \right)^{2\alpha}  %%@
(2\delta +\alpha) + 4\delta \left( \frac{1}{2} + \delta \right)^{\alpha}\left( \frac{1}{2} - \delta %%@
\right)^{\alpha}\cos \pi \alpha= 0,
\end{equation} 
which has a solution $\delta=0$ for every $\alpha\in (0,1)$. Moreover, the left-hand side is an odd function of %%@
$\delta$, so if nontrivial solutions exist, they are found symmetrically from both sides of the origin. Notice also %%@
that the derivative of the density is negative for $\delta$ just above $-1/2$, and positive for $\delta$ close to %%@
$1/2$. There are at most three solutions. Let $\alpha^{\ast}\approx 0.5946$ be the solution to $1 - 2\alpha^2 +\cos %%@
\pi \alpha =0$. Then the expansion of the left-hand side of equation (\ref{extrema}) to the linear order in $\delta$ %%@
yields that for $\alpha \leq \alpha^{\ast}$ the density is convex, and a local maximum exists at $\delta=0$ in the %%@
opposite case.          

As is already clear, the particle densities measured simultaneously at distinct points are correlated because the %%@
structure of the underlying media is unknown. The correlations in the fugacity profile reflect this fact. In %%@
principle, it is possible to evaluate general multiple point correlation functions for the fugacities using the %%@
expansion
\begin{equation}
\Erm \prod_{i=1}^n \phi(x_i) = \sum_{k=0}^n v^{n-k} (u-v)^{k} \sum \Erm \prod_{j=1}^k \frac{S(x_{i_j})}{S(1)},
\end{equation}
where the summation is over all subsets of $\lbrace x_i \rbrace$ of cardinality $k$. The calculations, using the same %%@
techniques as in the proof of the theorem above, are rather tedious for general $n$, and we just mention the result %%@
for the covariance. For $0 \leq x \leq y \leq 1$, 
\begin{equation}
\Erm \phi(x)\phi(y) - \Erm \phi(x) \Erm \phi(y) = (1-\alpha) (u-v)^2 x (1-y),
\end{equation}
which shows positive correlations.

\section{Motion of a tagged particle}
\label{SecTagged}

We shall first study the motion of a tagged particle in a boundary driven gas that has reached its steady state in %%@
fixed environment, but eventually we shall consider the case of random media. The convergence of the tagged particle %%@
processes to one-dimensional diffusions characterized by scale functions and speed measures (as explained for example %%@
in \cite{RogersWilliamsVol2, RevuzYor}) is not within our reach, but we content ourselves with the analysis of %%@
expectations of functionals of the particle position integrated up to the time of exit from a set.

\subsection{Travels in fixed environments} 
\label{SecTravelsFinite} 

Let the symmetric environment be fixed, and let $J= \lbrace a+1, a+2, \ldots ,b-1 \rbrace \subseteq \lbrack %%@
n\rbrack$. The particle system $(X_i)_{i=1}^n$ is assumed to be in the stationary state at $t=0$. Three types of %%@
paths of the tagged particle corresponding to three experimental situations are considered: In the first case, a %%@
transition of the particle from $a$ to $a+1$ is observed at $t=0$, and some time later, its first exit from $J$ by a %%@
jump from $b-1$ to $b$ is registered. It is observed to return to $a$ without reaching $b$ in the second case. %%@
Finally, the starting position $i\in I$ is known in the third setting, and the time of exit from $J$ by a jump %%@
through either end is measured.   

The position of the tagged particle will be denoted by $Y_t$ and the exit time from $J$ is $T_{J} = \inf \lbrace %%@
t\geq 0 : Y_t =a\ \mathrm{or}\ b \rbrace$. Kook and Serfozo \cite{Kook93} have constructed a process such that the %%@
travel times of the preceding paragraph are stopping times. This process keeps track of the paths that the particles %%@
are traveling, and labels them accordingly. Let $P^{a\to b}$, $P^{a\hookleftarrow b}$ and $P^{i}$ be the Palm %%@
measures corresponding the three cases respectively (see \cite{Baccelli03} for Palm calculus). For example,
\begin{equation}
P^{a\to b} (B) = \frac{1}{E N_{a\to b} (0,1 \rbrack } \int_{(0,1 \rbrack} I(Y_{\cdot+t}\in B) 
N_{a\to b}(\ud t),
\end{equation}
where $N_{a\to b}$ counts the particles that, in the stationary state, start the trip from $a$ to $b$ and eventually %%@
complete it. 

It should be noted that the process of numbers of particles does not determine the motion of the tagged particle %%@
because it does not tell in which order the particles leave a site. In zero-range processes, it is usually implicitly %%@
assumed that the particle to leave the site is chosen in uniform random manner from the particles at the site just %%@
before the transition takes place. The results in this article hold for any rule that assigns a fraction $f_j$, with %%@
$\sum_{j\geq 1} f_j =1$, of the total "work" $g(X_i)$ to the $j$-th particle of the total of $X_i$ particles present, %%@
ranked by the order of their arrival to the site $i$. Notice that this extra freedom does not induce biased movement %%@
of a tagged particle. 

We now proceed to the calculation of expectations a class of functionals of the tagged particle paths under the Palm %%@
measures $P^{a\to b}$, $P^{a\hookleftarrow b}$, and $P^{i}$. The functionals are of a specific type common in %%@
physics: They are functions of the particle position integrated up to the time of exit from a set under %%@
consideration. 

Define the continuum version of the function $s_j$ in formula (\ref{s_j}) as 
$s(x)=1 + \sum_{i=1}^{\lfloor x \rfloor -1} v_i^{-1}$. We shall also need the functions 
\begin{equation}
G_J (x,y) = \frac{\lbrack s(x\wedge y)-s(a) \rbrack \lbrack s(b) - s(x\vee y) \rbrack }{s(b)-s(a)}\, 
I\lbrace x,y \in (a,b) \rbrace
\end{equation}
and measures 
\begin{equation}
m_n (\ud x) = \sum_{i=1}^n \frac{Z'(\phi_i)}{Z(\phi_i)}\, \delta_{i} (\ud x),
\end{equation}
where $\phi_i$ is the fugacity of equation (\ref{fuga_i}), $Z$ is the partition function from equation %%@
(\ref{statgeneral1}), and $\delta_i$ is the Dirac measure at $i$.

\begin{thm}
For a function $f:{\mathbb Z} \longrightarrow {\mathbb R}$, 
{\setlength\arraycolsep{2pt}
\begin{eqnarray}
E^{a\to b} \int_0^{T_{J}} f(Y_t)\, \ud t &=& \int_{-\infty}^{\infty} f(y) G_{J}(y,y) \, m_n (\ud y) \\
\label{hookleft}
E^{a\hookleftarrow b} \int_0^{T_{J}} f(Y_t)\, \ud t &=& \int_{-\infty}^{\infty} f(y) 
\frac{v_a (s(b)-s(y))^2}{(s(b)-s(a))(s(b)-s(a+1))} \, m_n (\ud y) \\
\label{FixedStartingPoint}
E^{i} \int_0^{T_{J}} f(Y_t)\, \ud t &=& \int_{-\infty}^{\infty} f(y) G_{J}(i,y) \, m_n (\ud y).
\end{eqnarray}}
\end{thm}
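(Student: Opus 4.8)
The plan is to reduce each of the three identities to a single, well-understood object: the Green's function (expected occupation measure) of the tagged particle's motion under the appropriate Palm measure, evaluated against the stationary one-site distributions. The starting observation is that, by the construction of Kook and Serfozo, once a particle has been tagged at $t=0$ (whether conditioned to begin a trip $a\to b$, or $a\hookleftarrow b$, or to sit at a known site $i$), its position $Y_t$ performs a continuous-time nearest-neighbour random walk on $\lbrack a,b\rbrack$ until $T_J$, and the instantaneous rate at which it attempts a jump away from site $j$ is the \emph{symmetric} bulk rate $v_{j-1}\wedge v_j$-type quantity, but modulated by a \emph{time change}: the tagged particle only moves when it is selected among the $X_j$ particles at its site, so its clock runs at a rate proportional to $g(X_j)/X_j$ weighted by the arrival-order rule $f_j$. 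The key point — already flagged in the paragraph on the freedom in the selection rule — is that in the \emph{stationary} state the expected occupation time at a site is obtained by integrating against $m_n$, because $E^{\mu}[g(X_j)\cdot(\text{indicator the tagged particle is the one moving})]$ collapses, after the Palm/size-biasing bookkeeping, to $Z'(\phi_j)/Z(\phi_j)$. So the first task is to make this collapse precise: show that for any of the three Palm measures, $E^{\bullet}\int_0^{T_J} f(Y_t)\,\ud t = \int f(y)\,\mathcal{G}^{\bullet}(y)\,m_n(\ud y)$, where $\mathcal{G}^{\bullet}(y)$ is the purely \emph{geometric} Green's function of a \emph{rate-one} symmetric walk on $\lbrack a,b\rbrack$ with conductances $v_i$ between $i$ and $i+1$, started/conditioned as dictated by $\bullet$.

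Second, I would identify $\mathcal{G}^{\bullet}$ explicitly via the scale function $s(\cdot)$. For a symmetric nearest-neighbour walk with conductances $v_i$, the harmonic function is exactly $s$, and the standard potential-theoretic formula for the Green's function of the walk killed at $\lbrace a,b\rbrace$, started at $i$, is $G_J(i,y)=\frac{[s(i\wedge y)-s(a)][s(b)-s(i\vee y)]}{s(b)-s(a)}$ on $(a,b)$ — this is precisely the kernel written in the statement, and it yields (\ref{FixedStartingPoint}) immediately once the reduction of the first paragraph is in place (with $y=i$ on the diagonal giving the $a\to b$ case, since conditioning on the jump $a\to a+1$ at time $0^+$ places the particle at $a+1$ but the relevant weight is the return probability times occupation, which combines to $G_J(y,y)$). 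For the middle identity (\ref{hookleft}), the extra conditioning — the particle exits back through $a$ without ever hitting $b$ — is handled by Doob's $h$-transform with $h(j)=s(b)-s(j)$: under the $h$-transformed walk the occupation density at $y$ picks up the factor $(s(b)-s(y))^2/(s(b)-s(a+1))$ relative to the unconditioned one started at $a+1$, and the prefactor $v_a/(s(b)-s(a))$ is the rate of the initial jump $a\to a+1$ normalized by the probability (under the stationary Palm measure) that such a jump is the start of an eventual return-to-$a$ excursion. Writing $G_J$ and the $h$-transform in terms of $s$ turns all three right-hand sides into the stated closed forms.

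Third, I need the normalizations — the denominators $EN_{a\to b}(0,1]$ etc.\ in the Palm measures — to cancel correctly. Here I would compute the stationary flux of tagged-particle trips of each type directly from Corollary 1 and Proposition 1: the rate of $a\to b$ trips is the current $c$ times a factor, the rate of $a\hookleftarrow b$ excursions is $v_a\,\phi_a$ times the escape probability $[s(a+1)-s(a)]/[s(b)-s(a)] = v_a^{-1}/(s(b)-s(a))$ complemented appropriately, and the $P^i$ case needs the stationary mass at $i$, which is $R(\phi_i)$ but enters already through $m_n$. The bookkeeping must be arranged so that the fugacity-dependent pieces ($\phi_i$, $Z$) end up \emph{only} inside $m_n(\ud x)=\sum_i \frac{Z'(\phi_i)}{Z(\phi_i)}\delta_i(\ud x)$ and the \emph{geometry} ($s$, hence $G_J$) is driven purely by the conductances $v_i$ — this separation is exactly what the theorem asserts and is the conceptual heart of the result.

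The main obstacle I expect is the first step: rigorously justifying that under the Palm measure the tagged particle's occupation-time expectation factorizes into (stationary one-site weight $Z'(\phi_j)/Z(\phi_j)$) $\times$ (geometric Green's function of a process that has \emph{forgotten} the interaction $g$ and the selection rule). This is where the zero-range product structure and the insensitivity to the arrival-order rule $f_j$ must be used in a genuinely non-trivial way: one has to show that the time the tagged particle spends at $j$, integrated over the excursion, is — in expectation under the Palm measure — independent of the occupation numbers at \emph{other} sites and of the fine details of how work is shared, reducing to a clean Dirichlet-form / electrical-network computation. Once that reduction is secured, the remaining work is the (routine) potential theory for one-dimensional symmetric walks and the (bookkeeping-heavy but mechanical) verification of the Palm normalizations.
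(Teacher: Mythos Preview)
Your plan differs substantially from the paper's proof, and the step you yourself flag as the ``main obstacle'' is exactly where the paper takes a different road that you do not mention.

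The paper does \emph{not} try to show that the tagged particle is (a time change of) a symmetric conductance walk and then invoke potential theory. Instead it uses the Little-type identity of Kook and Serfozo: $E^{a\to b} T_J(i) = p_i(a\to b)\,E X_i / E N_{a\to b}(0,1]$, where $p_i(a\to b)$ is the stationary probability that a particle present at $i$ is currently on an $a\to b$ route. The crucial device is then Kook--Serfozo's factorisation $p_i(a\to b)=\alpha_i(a,b)\,\alpha_i^{\ast}(a,b)$ into a forward absorption probability and a \emph{time-reversed} (adjoint) absorption probability; this is why Proposition~2 was set up in the preliminaries. Solving the recursions gives $\alpha_i(a,b)=(\phi_i-\phi_a)/(\phi_b-\phi_a)$ and $\alpha_i^{\ast}(a,b)=\phi_a(\phi_b-\phi_i)/[\phi_i(\phi_b-\phi_a)]$; the fugacity factors combine with $E X_i=\phi_i Z'(\phi_i)/Z(\phi_i)$ and $E N_{a\to b}(0,1]=\phi_a/(s_b-s_a)$ so that every $\phi$ cancels except the $Z'/Z$ sitting in $m_n$, and what remains is the purely geometric $G_J(i,i)$. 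The second identity is handled the same way with $p_i(a\hookleftarrow b)=\alpha_i(b,a)\alpha_i^{\ast}(a,b)$, and the third is obtained \emph{from} the first two by an excursion-counting argument (opposite to your proposed order, where you derive the $a\to b$ case from the unconditioned one).

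Your factorisation into ``geometric Green's function $\times$ $m_n$'' is the \emph{conclusion}, not a step you can assume; the paper's mechanism for producing it is the forward/adjoint product, and without that you have no tool to show that the interaction $g$, the selection rule, and the nonequilibrium drive all disappear from the occupation times except through $Z'(\phi_i)/Z(\phi_i)$. Your assertion that ``$Y_t$ performs a continuous-time nearest-neighbour random walk'' is also not correct as stated: the tagged-particle position alone is not Markov, because its holding time at $j$ depends on the random occupation $X_j(t)$. What \emph{is} true is that the embedded jump chain is the symmetric conductance walk (the left/right probabilities are $v_{j-1}/(v_{j-1}+v_j)$ and $v_j/(v_{j-1}+v_j)$ regardless of $g$), and one could in principle try to combine this with a Palm computation of the mean holding time per visit; but you have not supplied that computation, and the Kook--Serfozo route via the adjoint process is both shorter and avoids the non-Markov issue entirely. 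If you want to salvage your outline, the missing ingredient is precisely the time-reversal of Proposition~2 and the identity $p_i(\text{route})=\alpha_i\alpha_i^{\ast}$.
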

\begin{proof}
Let us start with the first assertion. The proof is based on the Little laws (\cite{Whit90,Baccelli03,Serfozo99}) for %%@
general travel times in queuing networks as expressed by Kook and Serfozo \cite{Kook93}. $T_J (i)$ denotes the time %%@
the particle spends at $i$ before $T_J$, and $p_i(a\to b)$ is the probability that a particle at $i$ is traveling %%@
from $a$ to $b$ without return to $a$. Then
\begin{eqnarray}
\lefteqn{E^{a\to b} \int_0^{T_{J}} f(Y_t) \, \ud t = E^{a\to b}  \sum_{i=a+1}^{b-1} f(i) T_J (i)} \nonumber\\
&=& \sum_{i=a+1}^{b-1} f(i) E^{a\to b} T_J (i) = \sum_{i=a+1}^{b-1} f(i) p_i(a\to b) \frac{E X_i}{E N_{a\to b} (0,1 %%@
\rbrack},
\end{eqnarray}
where the last equality holds due to theorem 3.1 in reference \cite{Kook93}. According to corollary 3.4 in the same %%@
reference, the probability $p_i(a\to b)$ can be expressed as product $\alpha_i(a,b) \alpha_i^{\ast} (a,b)$ of the %%@
probabilities that a particle at $i$ is absorbed at $b$ before hitting $a$, and that a particle obeying the adjoint %%@
dynamics (this is why we calculated $L_n^{\ast}$ in the preliminaries section) is absorbed at $a$ before reaching %%@
$b$. Clearly, the first of these satisfies 
\begin{equation}
\label{alpharec}
\alpha_i (a,b) = p_i \alpha_{i+1}(a,b) + q_i \alpha_{i-1} (a,b)
\end{equation} 
with boundary conditions $\alpha_a (a,b)=0$, $\alpha_b (a,b)=1$, and the shorthand $p_i=v_i /(v_{i-1}+v_i)$, %%@
$q_i=1-p_i$. By proposition 2, an equation similar to (\ref{alpharec}) holds for the probabilities $\alpha_i^{\ast}$ %%@
with $p_i^{\ast} = v_i^{\ast}/(v_i^{\ast}+u_i^{\ast}) = v_i \phi_{i+1}/\lbrack (v_{i-1}+v_i)\phi_i \rbrack$ instead %%@
of $p_i$, and $q_i^{\ast} =1- p_i^{\ast}$ instead of $q_i$. The boundary conditions are replaced by $\alpha_a^{\ast} %%@
(a,b)=1$, $\alpha_b^{\ast} (a,b)=0$. The solution to this set of equations is 
{\setlength\arraycolsep{2pt}
\begin{eqnarray}
\alpha_i (a,b) &=& \frac{\phi_i -\phi_a}{\phi_b -\phi_a} , \\
\alpha_i^{\ast} (a,b) &=& \frac{\phi_a (\phi_b -\phi_i)}{\phi_i (\phi_b -\phi_a)}.
\end{eqnarray}}
Noting that $E X_i = \phi_i Z'(\phi_i)/Z(\phi_i)$, and $E N_{a\to b}(0,1\rbrack = v_a \phi_a \alpha_{a+1}(a,b)$,  %%@
some simple algebra finishes the proof of the first part. The second statement is shown in the same fashion. Observe %%@
that $p_i (a\hookleftarrow b) = \alpha_i (b,a) \alpha_i^{\ast}(a,b)$, and $E N_{a\hookleftarrow b} (0,1\rbrack = v_a %%@
\phi_a \alpha_{a+1}(b,a)$.

The third part of the theorem can be shown to hold as a corollary to the other two. For $a< i < j < b$, 
\begin{eqnarray}
\lefteqn{E^i T_J (j) = E^{i\to b} T_J (j) \frac{v_i\alpha_{i+1}(i,b)}{v_{i-1}\alpha_{i-1}(i,a) +v_i \alpha_{i+1} %%@
(i,b)}}
\nonumber\\
&&{}+E^{i\hookleftarrow b} T_J (j)\sum_{n\geq 0} \left( \frac{v_{i-1}\alpha_{i-1} (a,i)}{v_{i-1}+ v_i }  +  %%@
\frac{v_{i}\alpha_{i+1} (b,i) }{v_{i-1}+v_i } \right)^n \left( \frac{v_{i-1}\alpha_{i-1} (i,a) }{v_{i-1}+ v_i } +  %%@
\frac{v_{i}\alpha_{i+1} (i,b) }{v_{i-1}+v_i } \right)  \nonumber\\
&&{}\times\sum_{k=1}^n k {{n}\choose{k}} \left(\frac{v_i \alpha_{i+1} (b,i)}{v_{i-1}\alpha_{i-1}(a,i)+ v_i %%@
\alpha_{i+1} (b,i)} \right)^k \left(\frac{v_{i-1} \alpha_{i-1} (a,i)}{v_{i-1}\alpha_{i-1}(a,i)+ v_i \alpha_{i+1} %%@
(b,i)} \right)^{n-k}\nonumber\\
&=& E^{i\to b} T_J (j) \frac{v_i\alpha_{i+1}(i,b)}{v_{i-1}\alpha_{i-1}(i,a) +v_i \alpha_{i+1} (i,b)} + 
E^{i\hookleftarrow b} T_J (j) \frac{v_i \alpha_{i+1}(b,i)}{v_{i-1}\alpha_{i-1}(i,a) +v_i \alpha_{i+1} (i,b)}. 
\end{eqnarray}
In the first equality, the time spent in $j$ while traveling from $i$ to either boundary is split into the time in %%@
$j$ for the trip from $i$ to the right boundary without return, and into the time in $j$ for the paths that do return %%@
to $i$. The outermost sum in the latter part is over the number of the returns. The inner sum then counts the %%@
expected number of right excursions. The left excursions do not contribute because $i<j$. The cases with $j \leq i$ %%@
can be treated similarly. 
Substitution of the formulae for the expected occupation times and simple algebra now finishes the proof. 
\end{proof}

\subsection{Travels in random media} 
\label{SecTravelRandom} 

As an application to the theorem of the preceding section, we now study the behavior of the tagged particle on %%@
macroscopic spatial intervals. To this end, we define $J_n := \lbrace \lfloor an \rfloor +1, \ldots, \lfloor bn %%@
\rfloor -1 \rbrace$, where $0\leq a < b \leq 1$. The results of this section are presented informally and the proofs, %%@
or at least the sketches, are given within the text.

\subsubsection{Weak disorder: $\Erm v_i^{-1}<\infty$}

Let us again deal with the case of finite expected inverse rates first. To get nontrivial limits for the integrals in %%@
theorem 2 as $n\to \infty$, the motion of the tagged particle must be suitably speeded up. For this purpose, we shall %%@
consider the processes $(Y_{n^2 t})_{t\geq 0}$. Notice that the exit time $T_J^{(2)}$ of this speeded up process from %%@
a set $J$ is identical in law with $n^{-2} T_J$.
Let $f:\lbrack 0,1 \rbrack \rightarrow {\mathbb R}$ be continuous. Recall that by the assumption (\ref{uvcond}) on %%@
the boundary rates, the measures $m_n$ are bounded, with the weights of the Dirac measure given by a uniformly %%@
continuous function of the fugacity $\phi_n(x)$, and that the functions $S_{n,1} (x) = (1 + \sum_{i=1}^{\lfloor %%@
(n+1)x \rfloor -1} v_i^{-1})/n$ converge uniformly almost surely. Then, by equation (\ref{FixedStartingPoint}) of %%@
theorem 2,
{\setlength\arraycolsep{2pt}
\begin{eqnarray}
\label{Ydiffscaling}
& &E^{\lfloor xn \rfloor } \int_0^{T_{J_n}^{(2)}} f\left( \frac{1}{n} Y_{n^2 t} \right)\, \ud  t 
= n^{-2} \int_{-\infty}^{\infty} f(y/n) G_{J_n}(\lfloor xn \rfloor ,y) \, m_n (\ud y) \nonumber\\
& & \qquad\qquad\qquad\qquad \longrightarrow \frac{1}{u-v} \int_a^b f(y) \frac{(y\wedge x- a)(b- y\vee x)}{b-a} \, %%@
m(\ud y)
\end{eqnarray}}
for almost all environments as $n\to \infty$, where $m(\ud y) = \lbrack Z'(uy +v(1-y)) / Z (uy + v(1-y)) \rbrack \, %%@
\ud y$, because the integrand of the expression after the first equality, written in terms of the macroscopic %%@
coordinate $y\in (a,b)$, converges uniformly. The same scaling leads to a nontrivial almost sure limit also for the %%@
conditioned motion from $\lfloor an \rfloor$ to $\lfloor bn \rfloor$, i.e. the first case of the theorem. The only %%@
change is that $x$ is replaced by $y$ on the right hand side of the equation (\ref{Ydiffscaling}). For the paths that %%@
are forced to return to their starting point, however, no such limit exists because the expectations with respect to %%@
$P^{\lfloor an \rfloor \hookleftarrow \lfloor bn \rfloor}$ depend strongly on the transition rate $v_{\lfloor an %%@
\rfloor}$ at the boundary, cf.\ equation (\ref{hookleft}).

\begin{example}
The travel time of a particle through the whole system in a gas with the attractive interaction $g(k)=I\lbrace k>0 %%@
\rbrace$ satisfies
{\setlength\arraycolsep{2pt}
\begin{eqnarray}
\label{TravelTimeSelfAve}
& & n^{-2} E^{0\to n+1} T_{[n]} \longrightarrow \int_0^1 \frac{y(1-y)}{1-uy-v(1-y)}\, \ud y \nonumber\\
& &\ \ \ = \frac{2-u-v}{2(u-v)^2} + \frac{(1-u)(1-v)}{(u-v)^3} \log \frac{1-u}{1-v}. 
\end{eqnarray}}
Taking $v=0$ and sending $u\to 0$ yields the value $1/6$, which is of course the same as one would get by calculating %%@
the expected travel time in a gas of noninteracting particles, that is $g(k)=k$. Notice that, somewhat surprisingly, %%@
the travel time (\ref{TravelTimeSelfAve}) is finite even for one of the boundary rates tending to 1 (i.e. the radius %%@
of convergence of $Z$).
\end{example}

\subsubsection{Strong disorder: $\Erm v_i^{-1}=\infty$}

To prove that the integrals of the functions of the tagged particle position converge for $v_i^{-1}$ in domain of %%@
attraction of $\alpha$-stable variables with $\alpha \in (0,1)$, the functional limit theorems of Skorohod %%@
\cite{Skorohod57} are needed. Of course, there is no hope of getting statements concerning almost all environments %%@
for nontrivial paths of the tagged particle as in the case of weak disorder, but the aim is to determine the %%@
distributions of the travel functionals. 

It turns out that theorem 2.7 of reference \cite{Skorohod57} ideally suits our purposes: The theorem concerns the %%@
distributional convergence of continuous (in the Skorohod topology) functionals of sums of independent and %%@
identically distributed random variables under the assumption of pointwise convergence. Thus only continuity of the %%@
integrals over the macroscopic intervals $(a,b)$ must be proven. But this follows from the almost sure continuity of %%@
the function that maps the paths of $S_{n,\alpha}$ to the integrand (all this written again in the macroscopic %%@
coordinates) in the Skorohod topology, and the fact that the Skorohod topology is finer than that given by the metric %%@
$\int_0^1 \vert \phi(x) -\psi(x) \vert\, \ud x$ \cite{Billingsley68}. So the functionals of the tagged particle %%@
process speeded up by $n^{1+1/\alpha}$ converge as
\begin{equation}
\label{Yalphascaling}
E^{\lfloor xn \rfloor } \int_0^{T_{J_n}^{(1+1/\alpha)}}\!\!\! f\left( \frac{1}{n} Y_{n^{1+1/\alpha} t} \right)\, \ud %%@
t \longrightarrow 
\int_a^b f(y) \tilde{G}_J (x,y) \, \tilde{m} (\ud y)
\end{equation}
in distribution, with $\tilde{G}_J (x,y) = ( S(x\wedge y)- S(a) )( S(b)- S(y\vee x) )/( S(b)-S(a))$ and $\tilde{m} %%@
(dy)= \lbrack Z'(\phi(y))/Z(\phi(y)) \rbrack\, \ud y$. Here $S$ is again the $\alpha$-stable subordinator and %%@
$\phi(x) = u S(x)/S(1) + v (1- S(x)/S(1))$. As in the case of weak disorder, the convergence holds for asymmetric %%@
paths through intervals as well. The "propagator" must then be substituted with $\tilde{G}_J (y,y) = ( S(y)- S(a) )( %%@
S(b)- S(y) )/( S(b)-S(a))$.

The expectations over the disorder (with respect to $\Prm$) of the travel time integrals do not exist even in finite %%@
systems for $\alpha \leq 1/2$. To see this, notice that the coefficients $Z'(\phi_i)/Z(\phi_i)$ are bounded in the %%@
sums to be evaluated, but the expectations of the propagators $G_J$ in theorem 2 exist for $\alpha > 1/2$ only: In %%@
the simplest case of the set $J$ consisting of just one site, 
{\setlength\arraycolsep{2pt}
\begin{eqnarray}
\Erm \frac{v_{i-1}^{-1}v_i^{-1}}{v_{i-1}^{-1}+v_i^{-1}} &=& \Erm \frac{1}{v_{i-1}+v_i} 
\sim \int_1^{\infty} \int_1^{\infty} \frac{x^{-\alpha-1} y^{-\alpha-1}}{x^{-1}+y^{-1}} \, \ud x \ud y \nonumber\\
&=& \int_1^{\infty} (x(1+x))^{-\alpha} \sum_{k\geq 0} \frac{(\alpha)_{k\uparrow}}{k!(\alpha+k)} \left( %%@
\frac{x}{1+x}\right)^k \, \ud x,  \nonumber
\end{eqnarray}}
with $(\alpha)_{k\uparrow}= \alpha(\alpha+1)\cdots (\alpha+k-1)$. The sum being uniformly bounded in $x$, the result %%@
follows. The lack of convergence of expectations for $\alpha \leq 1/2$ is inherited by larger finite systems and, %%@
eventually, by the continuum limit. Indeed, using the same machinery as for calculating the moments and the two-point %%@
correlations of the fugacity in section \ref{SecSrm}, we get
\begin{equation}
\Erm \, \tilde{G}_J (x,y) = 
\alpha \Gamma (2-1/\alpha) (b-a)^{1/\alpha -1} \Phi(1)^{1/\alpha} \frac{(x-a)(b-y)}{b-a}, 
\end{equation}
for $x < y$. Here the gamma function diverges as $\alpha \to 1/2$ from above.

\begin{example}
Let us consider our canonical interacting system with $g(k)=I\lbrace k>0 \rbrace$ and, in particular, the expected %%@
time it takes for a particle to travel from the left to the right reservoir. Average over the media is also taken.  
We assume that $\alpha \in (1/2,1)$. Then
\begin{displaymath}
\lim_{n\to\infty} n^{-(1+1/\alpha)} \Erm E^{0\to n+1} T_{[n]}
= \frac{1}{u-v} \,\Erm \int_0^1 \left( S(1)-S(y) \right) \sum_{k\geq 1} z^k \left(\frac{S(y)}{S(1)} \right)^{k} \, %%@
\ud y,
\end{displaymath}
where $z=(u-v)/(1-v)$. As in the proof of theorem 1, we apply Fa\`a di Bruno formula in the evaluation the moments of %%@
$S(y)/S(1)$. In this case, the resulting summations can be carried out explicitly because, instead of a sum over %%@
partitions of a fixed number $k$, we have sums over partitions of numbers whose magnitude is determined by the %%@
exponential weight $z^k$, and the order of these summations can be interchanged. As a result, the expected travel %%@
time equals 
\begin{displaymath} 
\frac{1}{u-v} \int_0^1 (1-y) \int_0^{\infty} \alpha r^{\alpha-2} \Phi (1) e^{-r^{\alpha}\Phi (1)} z 
\frac{\ud}{\ud z} e^{-yr^{\alpha}\Phi (1) ((1-z)^{\alpha}-1)} \, \ud r \ud y.
\end{displaymath}
Performing the derivative and the integrations finally yields the result
\begin{equation}
\label{AlphaTravelTime}
\frac{\alpha^3\Phi(1)^{1/\alpha} \Gamma(2-1/\alpha) \lbrack (1-v)(1-u) \rbrack^{\alpha-1} }{(1-\alpha^2)  \lbrack %%@
(1-v)^{\alpha} - (1-u)^{\alpha} \rbrack^2} \Big\lbrace 2-u-v - \frac{\alpha (u-v) \lbrack (1-v)^{\alpha} + %%@
(1-u)^{\alpha} \rbrack}{ (1-v)^{\alpha} - (1-u)^{\alpha} } \Big\rbrace
\end{equation}
for it. Taking $\alpha \to 1$ leads, up to a model dependent factor $\Phi(1)$, to equation (\ref{TravelTimeSelfAve}) %%@
of the previous example. Nevertheless, the travel time (\ref{AlphaTravelTime}) diverges for any $\alpha\in (1/2, 1)$ %%@
as one of the boundary rates approaches the critical value 1.   
\end{example}

\section{Discussion} 
\label{SecDisc}

In the preceding sections, we have shown that the stationary state of a boundary driven zero-range process in %%@
symmetric random media, with divergent expected inverse jump rates, is in the large system limit described by a %%@
normalized $\alpha$-stable subordinator. It was also argued that, in this limit, the physical quantities related to %%@
the motion of a suitably speeded up tagged particle on an interval $J$ can be calculated using the formula %%@
$\tilde{G}_J (x,y)=(S(x\wedge y)-S(a))(S(b)-S(x\vee y))/(S(b)-S(a))$ for the propagator or "Green's function", and %%@
the measure $\tilde{m} (\ud y) = Z'(\phi(y))/Z(\phi(y)) \, \ud y$. In mathematics, such function $S$ is known as the %%@
scale function and the measure $\tilde{m}$ as the speed measure \cite{RevuzYor, RogersWilliamsVol2} of a diffusion. %%@
In fact, they {\it define} a one-dimensional diffusion even when the stochastic differential equation description is %%@
purely formal, e.g.\ with discontinuous coefficients as in our case. In practice, the scale function and the speed %%@
measure tell how the spatial and time axes must be stretched for the motion of the particle to be indistinguishable %%@
from a Brownian motion (or vice versa, how the motion of the particle can be constructed from a Brownian motion).

Kawazu and Kesten \cite{Kawazu84} have proved an invariance principle for a random walk in symmetric random media. %%@
The scale function and the speed  measure of the motion on large scales coincide with $S$ and $\tilde{m}$ given in %%@
this article. The speed measure is just the Lebesgue measure in this case. In the article of Kawazu and Kesten, the %%@
discrete random walk was on the infinite set ${\mathbb Z}$, so the only difference to our setting in the %%@
noninteracting case is that our particles are absorbed at $0$ and $n+1$. For a convergence proof for zero-range %%@
processes on tori with fixed rates, see \cite{Siri98}. It would be of interest to find a proof that the speeded up %%@
motion of a tagged particle in a zero-range process in symmetric random media converges to a diffusion defined by the %%@
scale function and speed measure given in this article. 

\bigskip

\begin{acknowledgments}
The author would like to thank Stefan Geiss, Pekka Kek\"al\"ainen, and Juha Merikoski for discussions. This work has %%@
been supported by the Centre of Excellence program of the Academy of Finland. 

\end{acknowledgments}

\break

\vfill

\eject

\end{document}